\definecolor{dgreen}{rgb}{0,0.5,0}
\def\cA{{\mathcal{A}}}
\def\cS{{\mathcal{S}}}
\def\t#1{\tilde{#1}}
\def \L{\mathcal{L}}
\newcommand{\LJB}{\mbox{LJB--algebra}\xspace}
\newcommand{\LJBs}{\mbox{LJB--algebras}\xspace}
\newcommand{\CA}{C^*\mbox{--algebra}\xspace}
\newcommand{\CAs}{C^*\mbox{--algebras}\xspace}
\newtheorem{theorem}{Theorem}
\newtheorem{definition}{Definition}
\newtheorem{lem}{Lemma}
\theoremstyle{definition}
\theoremstyle{remark} 
\newtheorem*{rem}{Remark}
\begin{document}
\title[Quantumness via Jordan product]{Defining quantumness via the Jordan product}
\author{Paolo Facchi$^{1,2}$, Leonardo Ferro$^{3,4,5}$, Giuseppe Marmo$^{3,4}$ and  Saverio Pascazio$^{1,2}$}
\address{$^{1}$Dipartimento di Fisica and MECENAS, Universit\`a di Bari, I-70126  Bari, Italy}
\address{$^{2}$INFN, Sezione di Bari, I-70126 Bari, Italy}
\address{$^{3}$Dipartimento di Fisica and MECENAS, Universit\`a di Napoli ``Federico II'', 
I-80126 Napoli, Italy }
\address{$^{4}$INFN, Sezione di Napoli, I-80126 Napoli, Italy}
\address{$^{5}$Departamento de Matem\'aticas, Universidad Carlos III de Madrid, 
28911 Legan\'es, Madrid, Spain}
\date{\today}

\begin{abstract}
We propose alternative definitions of classical states and quantumness witnesses by focusing on the algebra of observables of the system. A central role will be assumed by the anticommutator of the observables, namely the Jordan product. This approach turns out to be suitable for generalizations to infinite dimensional systems. We then show that the whole algebra of observables can be generated by three elements by repeated application of the Jordan product.
\end{abstract}

\pacs{
03.67.Mn;
03.65.Fd}


\maketitle

\section{Introduction and motivations}

The definition and characterization of the quantum features of a physical system is an important problem, both for its fundamental implications and its practical aspects, due to the advent of quantum information processing~\cite{NC}. There are tasks in computation and communication that can be efficiently performed only if quantum resources are available. Classical systems are not suitable for such applications.

A recent attempt to define and discriminate quantumness and classicality was made in Refs.\ \cite{AVR,APVR} by Alicki and collaborators, who introduced the idea of ``quantumness witness'', motivating interesting experiments \cite{brida08,bridagisin}. These experiments checked the quantumness of the system investigated (a photon) and ruled out possible (semi)classical descriptions.

These studies, as well as those that ensued \cite{quantumness12,PhTrRoyal,anticommutators12}, focused on finite-dimensional systems. In brief, the approach was the following.
Consider a physical system and its $\CA$  $\mathcal{A}$~\cite{Araki, BratteliRobinson}. The key observation \cite{AVR,APVR} is to notice that the following two statements are equivalent: given any pair $a, b \in\mathcal{A}$,
\begin{eqnarray}
i) &  & \mathcal{A} \; \textrm{is commutative:} \; [a,b]=\frac{i}{2} (ab - ba)=0 ; \label{classicalC} 
\label{eq:commutator}
\\
ii) \;&  & a\geq 0, b\geq 0 \; \longrightarrow \; \{a,b\}= ab + ba \geq 0 . \label{classicalA} 
\end{eqnarray}
[We introduced an additional factor $i/2$ in (\ref{classicalC}) with respect to the familiar definition of commutator. In this way the space of real elements acquires the structure of a Lie
algebra.]
Therefore, if the symmetrized product (\ref{classicalA}) of two positive observables can take negative values, the algebra $\mathcal{A}$ is non-Abelian and the system is quantum. A state $\rho$ of a quantum (or classical) system is defined to be classical \cite{quantumness12} if 
\begin{equation}
\label{commab}
 \rho([a,b]) = 0, \qquad \forall  a, b \in\mathcal{A}.
\end{equation}
Classical states are therefore ``transparent" to all commutators and in this sense do not ``detect" the non-commutativity of the algebra.
An observable $q=\{a,b\}$, with $a,b$ positive, is positive on all classical states, but can take negative values for quantum states, ``witnessing" in this way the ``quantumness" of the system.

In the above definitions, the focus is on the whole $\CA$ which is assumed to embed the real algebra of observables.  
In this article we will propose alternative definitions that directly refer to the algebra of  observables and can be generalized to infinite-dimensional systems.

\section{Lie--Jordan algebras}

Let $\cA$ be an associative $*-$algebra. We denote by $\L$ the (real) self-adjoint part of $\cA$, i.e.\ the \emph{algebra of observables}
\begin{equation}
 \L = \{\, a \in \cA \mid a^* = a\,\},
\end{equation}
which, when $\cA$ is noncommutative, is not closed under the associative product:
\begin{equation}
 (ab)^* = ba \neq ab,
\end{equation}
for some $a,b \in \L$. This requires  the introduction of new algebraic structures. By following the seminal ideas by Jordan \cite{Jordan}, later developed in conjunction with  Wigner and von Neumann \cite{JWV34}, we define the Jordan product \cite{Hanche84} as the symmetrized product
\begin{equation}\label{jordan}
a \circ b = \frac{1}{2} (ab + ba) = \frac{1}{2} \{a,b\},\qquad \forall\,a, b \in \L.
 \end{equation}
The self-adjoint algebra $\L$ is closed with respect to this commutative product
\begin{equation}
 (a \circ b)^* = a \circ b,
\end{equation}
but it is \emph{not} associative. However it satisfies a weak form of associativity
\begin{equation}\label{jassociativity}
(a^2 \circ b) \circ a = a^2 \circ (b \circ a) ,
\end{equation}
where \begin{equation}
a^2 = a \circ a. 
\end{equation}
The algebra $\L$ is also naturally endowed with a Lie product 
\begin{equation}
\label{lie}
 \left[a,b\right] = \frac{i}{2} (ab - ba),\qquad \forall\,a, b \in \L ,
 \end{equation}
where the factor 1/2 is introduced for convenience and the imaginary unit in order to make the product self-adjoint:
 \begin{equation}
[a,b]^* = [a,b]\in \L. 
\end{equation}
This antisymmetric bilinear form equips $\L$ with a Lie algebra structure, since it verifies the Jacobi identity:
\begin{equation}\label{jacobi}
 \left[\left[a,b\right],c\right] + \left[\left[c,a\right],b\right] + \left[\left[b,c\right],a\right] = 0.
\end{equation}
Note that these two operations are compatible in the sense that the Leibniz identity is satisfied
\begin{equation}\label{leibniz}
 \left[a,b\circ c\right] = \left[a,b\right]\circ c + b\circ \left[a,c\right] .
\end{equation}
Moreover, the lack of associativity of the Jordan product is related to the Lie bracket by
\begin{equation}
\label{associator}
 (a\circ b)\circ c - a \circ (b \circ c) 
 = \left[b,\left[c,a\right]\right] .
\end{equation}
The above identity will be central in our analysis. It offers a remarkable physical insight. The lack of commutativity (right-hand side) turns out to be related to the symmetric product in a way that is different (and as we will argue, more general) from Eqs.\ (\ref{classicalC})-(\ref{classicalA}). This observation will enable us to define the quantumness of a physical system in terms of the lack of associativity of its Jordan product.

A real vector space $\L$ with a symmetric operation $\circ$ and an antisymmetric one $\left[\cdot ,\cdot \right]$, satisfying properties (\ref{jacobi}), (\ref{leibniz}) and (\ref{associator}) will be called a Lie--Jordan algebra.

\begin{rem}
 If we are given a Lie--Jordan algebra $\L$, we can always recover the associative product on the complexified vector space $\L^\mathbb{C} = \L \oplus i \L$ by defining
\begin{equation}
  a b = a\circ b - i \left[a,b\right]  
\end{equation}
that justifies all factors 1/2.
\end{rem}
\vspace{2mm}

\section{Infinite-dimensional case}
\label{infcase}

For infinite-dimensional algebras one introduces a Banach structure on the algebra, that is a complete norm $\| \cdot \|$ verifying $\forall\,a,b \in \L$:
\begin{eqnarray}
i) & & \| a \circ b\| \leq \|a\|\ \|b\| ; \\
ii) & & \| a^2 \| = \| a\|^2 ; \\
iii) & & \|a^2\| \leq \|a^2 + b^2\|. 
\label{banach3}
\end{eqnarray} 
In this case the Lie--Jordan algebra is called a Lie--Jordan Banach (LJB) algebra. A key observation helps clarifying the unique correspondence between \LJBs and $\CAs$  \cite{liejordan12}: a $\CA$ is always the complexification of a \LJB and inherits the norm $\| x \| = \| x^*x \|^{1/2}$, where $x = a+ib$ and $x^*x \in \L$.
Hence from an algebraic (and topological) point of view it is completely equivalent to consider $\CAs$ or \LJBs. For instance, commutative $\CAs$ are given by the associative \LJBs, i.e.\ Lie--Jordan algebras where the Jordan product is associative. This can be easily proved.
\begin{theorem}\label{associative LJ}
 A $\CA$ $\cA = \L \oplus i \L$ is commutative if and only if the \LJB $\L$ is associative.
\end{theorem}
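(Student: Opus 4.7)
The plan is to use the associator identity (\ref{associator}) as the main bridge. The easy direction is immediate: if $\cA$ is commutative, then $[b,[c,a]]=0$ trivially for all $a,b,c\in\L$, so the right-hand side of (\ref{associator}) vanishes identically and the Jordan product is associative.

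For the converse, suppose the Jordan product on $\L$ is associative. Then (\ref{associator}) forces
\begin{equation*}
[b,[c,a]] = 0\qquad \text{for all } a,b,c\in\L.
\end{equation*}
The task is to upgrade this ``vanishing of all double commutators'' to the stronger statement ``$[c,a]=0$ for all $a,c\in\L$,'' which (via $\cA = \L\oplus i\L$) is equivalent to commutativity of $\cA$. This is the main obstacle: the Lie-theoretic hypothesis alone (that the derived algebra lies in the center of the Lie algebra) does not in general force abelianness, so one has to feed in both the Leibniz compatibility (\ref{leibniz}) and the Banach axiom (ii) $\|a^2\|=\|a\|^2$.

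The key move is to apply Leibniz twice. Starting from $[b,[c,a\circ e]]=0$ (which holds because $a\circ e\in\L$) and using (\ref{leibniz}) to distribute the two outer brackets across the Jordan products, the terms involving $[b,[c,a]]$ and $[b,[c,e]]$ disappear by hypothesis, leaving the polarized identity
\begin{equation*}
 [c,a]\circ [b,e] + [b,a]\circ [c,e] = 0,
\end{equation*}
valid for all $a,b,c,e\in\L$. Specializing $b=c$ and $e=a$ collapses this to $2\,[c,a]\circ[c,a]=0$, i.e.\ $[c,a]^{2}=0$.

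Finally, since $[c,a]\in\L$, axiom (ii) of the LJB norm gives $\|[c,a]\|^{2}=\|[c,a]^{2}\|=0$, hence $[c,a]=0$ for all $a,c\in\L$. Extending by complex linearity to $\cA=\L\oplus i\L$ yields commutativity of $\cA$, completing the nontrivial direction. I expect the Leibniz/polarization step to be the only delicate one; everything else is a direct consequence of (\ref{associator}) and the $C^{*}$-type norm property.
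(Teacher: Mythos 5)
Your proposal is correct and follows essentially the same route as the paper: the easy direction is the associator identity, and for the converse you apply Leibniz twice to a vanishing double commutator involving a Jordan square to get $[c,a]^2=0$ (your polarized identity, specialized at $b=c$, $e=a$, reduces to the paper's computation $0=[a,[b^2,a]]=-2[a,b]^2$). The only difference is cosmetic: you polarize before specializing, and you spell out the final step $\|[c,a]\|^2=\|[c,a]^2\|=0$, which the paper leaves implicit.
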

\begin{proof}
 Assume first that $\cA$ is commutative. Then, trivially, from the associator identity (\ref{associator}) it follows that the \LJB $\L$ is associative. 
 Conversely, if $\L$ is associative, then any triple commutator vanishes, so that $\forall\, a,b \in \L$
\begin{eqnarray}
0 &=&  [a,[b^2,a]]\\
&=& [a, 2b \circ [b,a]]\\
&=& 2b \circ [a,[b,a]] + [a,2b] \circ [b,a]\\
&=& 2 b \circ [a,[b,a]] - 2 [a,b]^2\\
&=& - 2 [a,b]^2,
\end{eqnarray}
where we used the Leibnitz identity in the second and the third equality.
In conclusion, $[a,b] = 0,\ \forall\, a,b\in \cA$. 
\end{proof}
Classical systems (namely those systems whose observables make up an Abelian algebra)
are therefore characterized by associative LJB--algebras. In the Appendix it is further analyzed how the commutation of the operators can be expressed in terms of the Jordan product only.
In the next section we will define classical and quantum states.

\section{Classical states and quantumness witness}
\label{clstqw}

\subsection{Classical states}
\label{clst}

The space of states $\cS(\L)$ is defined \cite{liejordan12} by all real normalized positive linear functionals on $\L$, i.e. 
\begin{equation}
 \rho\colon \L \to \mathbb{R}                                                                                                                                          
\end{equation}
linear and such that $\rho(\mathds{1}) = 1$ and $\rho(a^2) \geq 0, \ \forall\,a \in \L$.
We can now define classical states.
\begin{definition}
\label{classical state}
We say that a state $\rho \in \cS(\L)$ is \emph{classical} if
\begin{equation}
 \rho( (a\circ b) \circ c - a \circ (b \circ c)) = 0, \qquad \forall\, a,b,c \in \L
 \label{classst}  
\end{equation}
A state that is not classical is \emph{quantum}.
\end{definition}

In other words, classical states do not ``detect" the lack of associativity of the algebra and therefore (nonvanishing) triple commutators.
\begin{rem}
Observe that the above definition 
of classical states is applicable also to  algebras of unbounded operators. More on this after Theorem \ref{classicalfin}. 
\end{rem}

\subsection{Quantumness Witness}
\label{quantumness}

As a consequence of Theorem \ref{associative LJ}, for a quantum (i.e.\ non-commutative) system, it is always possible to find a triple of observables $a,b,c$ such that the observable
\begin{equation}
 q = (a\circ b)\circ c - a \circ (b \circ c)
 \label{qabc} 
\end{equation}
is non-vanishing. Moreover, classical states (\ref{classst}) vanish on $q$-observables.
Thus $q \in \L$ is a candidate ``witness'' for the quantum nature of the algebra of observables. Notice that, unlike usual (entanglement and quantumness) witnesses, $q$ detects quantumness as soon as $q\neq 0$. 
However, if one wants to consider only positive witnesses, one can always use $q^2$ instead of $q$.

\section{Characterization of classical states}
\label{finite}

One can give a nice characterization of classical states. We need some lemmas and  definitions. 

First of all we recall~\cite{Araki} that \emph{normal} states $\rho\in\cS(\L)$ can be \emph{uniquely} realized as traces over density matrices $\tilde{\rho}$ \emph{belonging} to the $\LJB$ $\L$:
\begin{equation}
\rho(a) = \tr (\tilde{\rho} A), \qquad  \tilde{\rho}\in \L, \quad  \tilde{\rho} \geq 0,\quad \tr\tilde{\rho}=1.
\end{equation}
Moreover, the set of all normal states is dense in $\cS(\L)$ in the weak topology. Therefore, considering only normal states is physically equivalent to considering the set of all states~$\cS(\L)$.

\begin{rem}
In the finite dimensional case all states $\rho\in\cS(\L)$ are normal. Following Ref.~\cite{quantumness12} we will freely use  this identification in the finite-dimensional case and will  commit the sin of not distinguishing between states and density matrices.
\end{rem}

\begin{definition}
Given a Lie algebra $\mathfrak{g}$, the derived algebra is $[\mathfrak{g},\mathfrak{g}]$, i.e.\ the subalgebra generated by taking all possible Lie commutators.
\end{definition}

\begin{rem}
In many relevant cases the derived algebra is the whole algebra, and is called ``perfect algebra''. A more stringent result also holds true, that is all semisimple Lie algebras can be generated by repeated commutators of only two elements (see \cite{kuranishi51} and Theorem \ref{semisimple} in Sec.\ \ref{morefin} of this article). This is true, e.g. for the  algebras $\mathfrak{su}(\mathrm{n})$. 
\end{rem}

\begin{lem}
 A normal state $\rho \in \cS(\L)$ is classical if and only if its density matrix $\tilde\rho$ belongs to the center of the derived algebra $[\L,\L]$.
\end{lem}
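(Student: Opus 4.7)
My plan is to use the associator identity~(\ref{associator}) to convert the classicality condition into a purely Lie-theoretic statement, translate it to a trace identity using the density-matrix representation of normal states, and then exploit trace cyclicity together with faithfulness of the trace to extract a commutation condition on $\tilde\rho$.

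By~(\ref{associator}), the associator in Definition~\ref{classical state} equals $[b,[c,a]]$, and since the derived algebra $[\L,\L]$ is by definition the linear span of all commutators $[c,a]$, classicality is equivalent to $\rho([b,y])=0$ for every $b\in\L$ and $y\in[\L,\L]$. For a normal state $\rho(\cdot)=\tr(\tilde\rho\,\cdot)$, and a direct calculation using cyclicity of the trace together with the convention~(\ref{lie}) yields the identity $\tr(\tilde\rho\,[b,y])=\tr([y,\tilde\rho]\,b)$. So classicality amounts to the requirement that $\tr([y,\tilde\rho]\,b)=0$ for every $b\in\L$ and $y\in[\L,\L]$.

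Since the Lie bracket of two self-adjoint elements is self-adjoint, $[y,\tilde\rho]\in\L$. Specializing $b=[y,\tilde\rho]$ in the previous identity yields $\tr([y,\tilde\rho]^2)=0$, and faithfulness of the trace on $\L$ forces $[y,\tilde\rho]=0$ for every $y\in[\L,\L]$; this is the forward direction. The converse is immediate, because commutation of $\tilde\rho$ with $[\L,\L]$ makes the preceding trace identity vanish term by term. The phrase ``center of the derived algebra'' must therefore be read here as the centralizer of $[\L,\L]$ inside $\L$, i.e.\ the set of $\tilde\rho\in\L$ with $[\tilde\rho,y]=0$ for all $y\in[\L,\L]$: this is the only interpretation consistent with $\tilde\rho$ being a normalized density matrix, since the elements of $[\L,\L]$ are themselves traceless.

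The main place requiring vigilance in the execution is the bookkeeping of signs and factors of $i/2$ produced by the bracket convention~(\ref{lie}); the core argument is a one-line trace manipulation plus nondegeneracy of the pairing $(u,v)\mapsto\tr(uv)$ on self-adjoint elements. In the infinite-dimensional setting the same proof goes through once one checks that $[y,\tilde\rho]$ inherits trace-class-ness from $\tilde\rho$, so that faithfulness of the trace still applies.
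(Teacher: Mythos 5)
Your proof is correct and follows essentially the same route as the paper: reduce classicality via the associator identity to $\tr(\tilde\rho\,[b,y])=0$ for $y\in[\L,\L]$, move the bracket onto $\tilde\rho$ (you do this by trace cyclicity, the paper by the Leibniz identity plus vanishing of traces of brackets), and conclude $[\tilde\rho,y]=0$ by nondegeneracy of the trace pairing. Your explicit choice $b=[y,\tilde\rho]$ and the remark that ``center'' must be read as the centralizer of $[\L,\L]$ in $\L$ simply make precise steps the paper leaves implicit.
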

\begin{proof}
By using the the compatibility condition (\ref{associator}), the Leibniz identity (\ref{leibniz}) and the properties of the trace we have:
\begin{eqnarray}
   \rho( (a\circ b) \circ c - a \circ (b \circ c)) &=& \mathrm{Tr}(\t\rho\circ[b,[c,a]]) \nonumber \\ &=&  \mathrm{Tr}([\t\rho \circ b, [c,a]]) - \mathrm{Tr}(b \circ [\t\rho, [c,a]])\nonumber \\ &=& \mathrm{Tr}(b \circ [\t\rho, [a,c]]). \label{centder}
\end{eqnarray}
Hence $\rho( (a\circ b) \circ c - a \circ (b \circ c)) = 0$ for all $a,b,c \in \L$ implies $[\t\rho, [a,c]] = 0$ for all $a,c \in \L$, i.e.\ $\t\rho$ is in the center of $[\L,\L]$. The converse is obviously true from Eq.\ (\ref{centder}). 
\end{proof}

\begin{lem}
 A density matrix $\t\rho$ is in the center of the algebra $\L$ if and only if the corresponding state $\rho$ satisfies
\begin{equation}
 \rho([a,b]) = 0,
\end{equation}
for all $a,b \in \L$.
\end{lem}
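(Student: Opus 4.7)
The plan is to translate the condition $\rho([a,b]) = 0$ into a trace identity involving $[\tilde\rho,a]$, and then isolate $[\tilde\rho,a]$ via a clever choice of $b$. First I would write $\rho(x) = \tr(\tilde\rho x)$ and expand
\begin{equation}
\rho([a,b]) = \frac{i}{2}\,\tr\bigl(\tilde\rho(ab-ba)\bigr) .
\end{equation}
Using cyclicity of the trace on the second term, $\tr(\tilde\rho ba) = \tr(a\tilde\rho b)$, one rewrites this as
\begin{equation}
\rho([a,b]) = \frac{i}{2}\,\tr\bigl((\tilde\rho a - a\tilde\rho)\,b\bigr) = \tr\bigl([\tilde\rho,a]\,b\bigr),
\end{equation}
where in the last step I used the convention $[\tilde\rho,a] = \frac{i}{2}(\tilde\rho a - a\tilde\rho)$. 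Note that this identification is precisely what makes $[\tilde\rho,a]$ an element of $\L$, which will be crucial below.

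With this identity in hand, the ($\Leftarrow$) direction is immediate: if $\tilde\rho$ lies in the center of $\L$, then $[\tilde\rho,a]=0$ for every $a\in\L$, and the trace vanishes term by term. For the ($\Rightarrow$) direction, I would assume $\rho([a,b])=0$ for all $a,b\in\L$ and, for each fixed $a$, specialize to $b = [\tilde\rho,a]$, which is a legitimate choice because $[\tilde\rho,a]\in\L$. This yields
\begin{equation}
\tr\bigl([\tilde\rho,a]^{2}\bigr) = 0 .
\end{equation}
Since $[\tilde\rho,a]$ is self-adjoint, $[\tilde\rho,a]^{2}$ is a positive operator, and the trace is faithful on positive elements; hence $[\tilde\rho,a]^{2}=0$ and so $[\tilde\rho,a]=0$. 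As $a\in\L$ was arbitrary, $\tilde\rho$ commutes with every self-adjoint element and therefore, by complexification, with every element of $\cA$, i.e.\ $\tilde\rho$ lies in the center of $\L$.

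I do not expect any serious obstacle here: the only subtlety is keeping track of the factors of $i/2$ inherited from the normalization of the Lie bracket, and making sure one picks $b$ in $\L$ rather than in the complexified algebra (so that the positivity argument on $[\tilde\rho,a]^{2}$ applies). The faithfulness of the trace on positive trace-class operators handles the infinite-dimensional case as well, provided one restricts attention to normal states as the preceding remarks do.
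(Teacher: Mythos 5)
Your proposal is correct and follows essentially the same strategy as the paper: rewrite $\rho([a,b])$ as a trace pairing of one observable against a bracket with $\t\rho$, and then conclude $[\t\rho,a]=0$ from non-degeneracy (faithfulness) of the trace on positive elements. The only differences are cosmetic: the paper performs the manipulation inside the Lie--Jordan formalism via the Leibniz identity and tracelessness of brackets, whereas you expand the associative product and use cyclicity of the trace, and you make explicit the choice $b=[\t\rho,a]$ (and the self-adjointness of $[\t\rho,a]$, which guarantees positivity of its square) that the paper leaves implicit in the step ``implies $[\t\rho,b]=0$''.
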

\begin{proof}
If $\t\rho$ is in the center $Z(\L)$ then
\begin{equation}
 \mathrm{Tr}(\t\rho \circ [a,b]) = \mathrm{Tr}([\t\rho \circ a, b]) = 0,
\end{equation}
for all $a,b \in \L$.\\
Conversely, if $\rho([a,b]) = 0$ $\forall\, a,b \in \L$ then
\begin{equation}
 \mathrm{Tr}(\t\rho \circ [a,b]) = \mathrm{Tr}([\t\rho \circ a, b]) - \mathrm{Tr}(a \circ [\t\rho, b]) = - \mathrm{Tr}(a \circ [\t\rho, b]) = 0,
\end{equation}
implies $[\t\rho,b] = 0$ $\forall\, b \in \L$, i.e. $\t\rho \in Z(\L)$.
\end{proof}

From these lemmas it immediately follows the following 
\begin{theorem}
\label{classicalfin} 
 A normal state $\rho$ on a $\LJB$  of observables $\L$ is classical if and only if 
\begin{equation}
\label{charqwfin}
 \rho([a,b]) = 0,
\end{equation}
for all $a,b \in \L$.
\end{theorem}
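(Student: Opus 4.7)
The plan is to obtain this theorem by chaining the two preceding lemmas. Lemma 1 characterizes $\rho$ being classical by the condition $[\tilde\rho,[a,c]]=0$ for all $a,c\in\L$, while Lemma 2 characterizes $\rho([a,b])=0$ by the a priori stronger condition $\tilde\rho\in Z(\L)$, i.e.\ $[\tilde\rho,b]=0$ for all $b\in\L$. The theorem thus reduces to showing that centralizing $[\L,\L]$ and centralizing $\L$ amount to the same condition on $\tilde\rho$.

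One direction is immediate: if $\rho([a,b])=0$ for every $a,b$, Lemma 2 places $\tilde\rho$ in $Z(\L)$, so in particular $\tilde\rho$ commutes with every $[a,c]\in[\L,\L]$, and Lemma 1 yields classicality. For the converse I would start from $\rho$ classical, invoke Lemma 1 to obtain $[\tilde\rho,[a,c]]=0$ for all $a,c\in\L$, and then promote this to $[\tilde\rho,b]=0$ for every $b\in\L$ before applying Lemma 2.

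The promotion step is the only non-routine ingredient and is where the finite-dimensional hypothesis enters. It rests on the decomposition $\L = Z(\L)+[\L,\L]$, which holds because the underlying \CA\ splits by Wedderburn--Artin into a finite direct sum $\bigoplus_i M_{n_i}(\mathbb{C})$. Inside each block, any Hermitian element is uniquely a scalar multiple of $\mathds{1}_{n_i}$ (which is central in $\L$) plus a traceless Hermitian part; and a straightforward computation (using that $\mathfrak{su}(n_i)$ is perfect for $n_i\geq 2$, and trivial otherwise) shows that the traceless Hermitian elements exhaust the derived subalgebra of $M_{n_i}(\mathbb{C})_{sa}$. Writing $b=z+b_0$ with $z\in Z(\L)$ and $b_0\in[\L,\L]$ then gives $[\tilde\rho,b]=[\tilde\rho,z]+[\tilde\rho,b_0]=0$, so $\tilde\rho\in Z(\L)$ and Lemma 2 closes the argument.

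The real obstacle, in my view, is precisely this decomposition: without the block structure of the finite-dimensional $\CA$, the centralizer of $[\L,\L]$ is a priori strictly larger than $Z(\L)$, and the equivalence of Definition \ref{classical state} with \eqref{charqwfin} would fail. That is why the characterization is stated here for the finite-dimensional (normal-state) setting and labelled \texttt{classicalfin}; the more general infinite-dimensional version would need an alternative argument or a refined definition.
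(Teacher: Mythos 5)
Your proposal is correct and follows the paper's own route: the theorem is obtained exactly by chaining the two lemmas, one direction being immediate and the other requiring that an element of $\L$ commuting with all of $[\L,\L]$ already lies in $Z(\L)$. The paper declares this chaining ``immediate'' and leaves that last identification implicit, whereas you spell it out via the decomposition $\L = Z(\L)+[\L,\L]$ coming from the block structure of a finite-dimensional $\CA$ (perfectness of $\mathfrak{su}(n)$), which is precisely the right justification in the finite-dimensional/normal-state setting the paper has in mind.
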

This recovers, for the finite-dimensional case, the result obtained in Ref.~\cite{quantumness12}.
As emphasized before, classical states are ``transparent" to all commutators and do not ``detect" the non-commutativity of the algebra.

\begin{rem}
Notice that if  $a$ e $a^2 = a \circ a$ commute, then the compatibility relation~(\ref{leibniz}) implies the weak form of associativity~(\ref{jassociativity}). The converse, i.e.\ that weak associativity implies commutativity of $a$ and $a^2$, is not obvious, and is the content of Theorem~\ref{jordancomm} in the Appendix, which relies on the fact that the irreducible representations of a $\CA$ separate its elements.
For general $*$--algebras this might not be true.
\end{rem}

\begin{rem}
Observe that the above characterization (\ref{charqwfin}) would not be suitable in the infinite-dimensional case. Think for example of the CCR, $[\hat x, \hat p] = i \hbar \mathds{1}$\footnote{We stick here to the usual definition, different from (\ref{lie}).}, which would imply $\rho([\hat x, \hat p]) = i \hbar$.
By contrast, Definition \ref{classical state} of classical states is also applicable to infinite-dimensional systems and the use of the Jordan product avoids the problems related to the CCR.
\end{rem}

\section{More on classicality and associativity in Lie--Jordan algebras}
\label{morefin}

By taking advantage of the relations discussed above between $\LJBs$ and other algebraic structures, such as $\CAs$
and Lie algebras, one can obtain new interesting results as direct descendants of older ones. In particular, in this Section we will present two more theorems that can further enlighten our discussion of classicality in the framework of Lie--Jordan algebras.  

The first theorem is on the characterization of associative $\LJBs$:
\begin{theorem}
 Given a $\LJB$ $(\L,\circ,\left[\cdot,\cdot\right])$ with positive cone $\L^+$, the following statements are equivalent:
\begin{enumerate}
\item  $\L$ is associative; \label{classical1} 
\item $\L$ is isomorphic to  $C(X, \mathbb{R})$, for some  locally compact Hausdorff space  $X$; 
if $\L$ has a unit, $X$ is compact; \label{classical2} 
\item if $a,b \in \L^+$ such that $a-b \in \L^+$, then  $a^2 - b^2 \in \L^+$; \label{classical3} 
\item if $a,b \in \L^+$, then $a \circ b \in \L^+$. \label{classical4} 
\end{enumerate}
\end{theorem}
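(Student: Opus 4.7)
My plan is to establish the cycle of implications $(1)\Rightarrow(2)\Rightarrow(4)\Rightarrow(3)\Rightarrow(1)$, exploiting throughout the equivalence between \LJBs and $\CAs$ via complexification discussed in Section~\ref{infcase}.

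For $(1)\Rightarrow(2)$ I combine Theorem~\ref{associative LJ} (associativity of $\L$ is equivalent to commutativity of $\cA=\L\oplus i\L$) with the Gelfand--Naimark representation of commutative $\CAs$: $\cA \cong C_0(X,\mathbb{C})$ for some locally compact Hausdorff space $X$, which is compact in the unital case, and restricting to self-adjoint elements yields $\L\cong C_0(X,\mathbb{R})$. The implication $(2)\Rightarrow(4)$ is immediate, since on $C(X,\mathbb{R})$ the Jordan product coincides with pointwise multiplication, which preserves positivity pointwise. The step $(4)\Rightarrow(3)$ rests on a one-line algebraic identity: the antisymmetric cross terms cancel under the Jordan product, giving
\begin{equation}
a^2 - b^2 = (a+b)\circ(a-b),
\end{equation}
and under the hypotheses of (3) both factors lie in $\L^+$, so (4) delivers the conclusion.

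The main obstacle is $(3)\Rightarrow(1)$, which is the only genuinely operator-algebraic step, since the monotonicity of squaring on the positive cone is tightly connected to commutativity. I would argue contrapositively. If $\L$ is not associative, then by Theorem~\ref{associative LJ} the complexification $\cA$ is non-commutative, so by Gelfand--Naimark it admits an irreducible $*$-representation on a Hilbert space of dimension at least two. Restricting to a two-dimensional subspace and invoking Kadison's transitivity theorem to lift the data back to $\L$, it suffices to exhibit a counterexample to (3) inside $M_2(\mathbb{C})$. Taking $a$ with entries $a_{11}=2$, $a_{12}=a_{21}=a_{22}=1$ and $b=\mathrm{diag}(1,0)$, both matrices are positive, and $a-b$ (all entries equal to $1$) is likewise positive; yet $a^2-b^2$ has entries $4,3,3,2$ and determinant $-1$, hence a negative eigenvalue. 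Pulling this configuration back to $\L$ through the representation contradicts (3), closing the cycle. The entire operator-algebraic content of the theorem is concentrated in this last step, as invoking the density/lifting machinery is what turns the purely algebraic identities used elsewhere into the rigidity statement that characterizes commutativity.
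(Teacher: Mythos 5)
You should first note that the paper does not actually prove this theorem: it states it and defers to the literature (Dixmier, Landsman, Alicki--Piani--Van Ryn), so your proposal is being measured against the standard arguments rather than an in-paper proof. Most of your cycle is fine. For (\ref{classical1})$\Rightarrow$(\ref{classical2}) the combination of Theorem~\ref{associative LJ} with the commutative Gelfand--Naimark theorem is exactly right (with $C_0(X,\mathbb{R})$ in the non-unital case); (\ref{classical2})$\Rightarrow$(\ref{classical4}) is immediate; and (\ref{classical4})$\Rightarrow$(\ref{classical3}) via $a^2-b^2=(a+b)\circ(a-b)$, using that $\L^+$ is closed under sums, is clean and correct. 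Your $2\times 2$ counterexample is also correct: $a'-b'$ is rank-one positive while $a'^2-b'^2$ has determinant $-1$.

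The genuine gap is in (\ref{classical3})$\Rightarrow$(\ref{classical1}), precisely the step you identify as carrying all the operator-algebraic content. ``Invoking Kadison's transitivity theorem to lift the data back to $\L$'' does not work as stated, for two reasons. First, transitivity only prescribes $\pi(a)$ and $\pi(b)$ on a finite-dimensional subspace $K\subset\cH$; it gives no control over the global conditions $a\in\L^+$, $b\in\L^+$ and, crucially, $a-b\in\L^+$, which is an order relation in the whole algebra, not a statement about the compression to $K$. Second, $\langle\xi,\pi(a)^2\xi\rangle=\|\pi(a)\xi\|^2$ involves the action of $\pi(a)$ on all of $\cH$, so non-positivity of the square of the compressed matrix is not automatically inherited unless the prescribed operators map $K$ into itself. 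Both defects are repairable: use the positive refinement of transitivity (or the device $x\mapsto x^2$) to produce \emph{positive} elements $b$ and $c$ with $\pi(b)$ and $\pi(c)$ acting on $K$ as $b'$ and $a'-b'$ extended by zero (hence leaving $K$ invariant), and then \emph{define} $a:=b+c$; now $a,b,a-b\in\L^+$ hold by construction, and for $\xi\in K$ one gets $\langle\xi,\pi(a^2-b^2)\xi\rangle=\langle\xi,(a'^2-b'^2)\xi\rangle<0$, so $a^2-b^2\notin\L^+$. Alternatively, and closer to the sources the paper cites, prove (\ref{classical3})$\Rightarrow$(\ref{classical4}) directly by Ogasawara's trick --- from $a\leq a+\varepsilon b$ one gets $a^2\leq(a+\varepsilon b)^2$, i.e.\ $2\,a\circ b\geq-\varepsilon b^2$ for every $\varepsilon>0$, hence $a\circ b\geq 0$ by closedness of the cone --- and then only (\ref{classical4})$\Rightarrow$(\ref{classical1}) requires representation theory, where merely the individual positivity of two elements must be lifted, which Kadison transitivity does supply without further ado.
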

The above theorem translates the ideas explored in Refs.~\cite{Dix77,La98,APVR} in terms of $\LJBs$, their associativity and lack thereof.
It follows that for a quantum system one can always find pairs of observables $a,b \in \L^+$ such that the observable 
\begin{equation}
\label{qAVR}
 q_{\rm AVR} = a \circ b
\end{equation}
is not positive semidefinite and can be adopted as a quantumness witness \cite{AVR,APVR,quantumness12}. This definition appears as a particular case of the more general one (\ref{qabc}).

The second theorem concerns the generation of an algebra from a small set of observables, which is interesting for the following reasons. The characterizations and definitions of classicality (\ref{charqwfin}) and quantumness (\ref{qAVR}) make use of simple commutators and anticommutators (that involve only couple of operators). By contrast, the strategy adopted in this article, hinging upon the identity (\ref{associator}), makes use of commutators and anticommutators that involve \emph{three} operators. If one aims at an operational approach \cite{anticommutators12}, towards experiments, one must make careful use of resources. For example, (traces of) anticommutators involving $n$ operators are related to $n$th-order interference experiments and increasingly complicated quantum circuits.
In light of this observation it is interesting to understand how one can generate the whole algebra by making use of a small set of generators via the Jordan product. We are going to prove in Theorem~\ref{thm:3el} that under suitable hypotheses, three generators are enough.

\begin{definition}
A set of elements $a_1, \ldots, a_k$ is said to \emph{generate} an algebra $\L$ if every element of $\L$ is linearly dependent on products of $a_1, \ldots, a_k$; the elements $a_1, \ldots, a_k$ are then called \emph{generators} of $\L$.
\end{definition}
Kuranishi proved a sufficient condition for Lie algebras to be generated by two elements \cite{kuranishi51}:

\begin{theorem}\label{semisimple}
 Let $\mathfrak{g}$ be a semisimple Lie algebra over the real or complex numbers. Then there exist two elements $a$ and $b$ which generate $\mathfrak{g}$.
\end{theorem}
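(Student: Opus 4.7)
The plan is to use the structure theory of semisimple Lie algebras, in particular the root space decomposition. Let me first sketch the argument for the complex case, then explain how to descend to the real case.

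For a complex semisimple Lie algebra $\mathfrak{g}$, I would fix a Cartan subalgebra $\mathfrak{h}\subset \mathfrak{g}$ with root system $\Phi$ and root space decomposition $\mathfrak{g} = \mathfrak{h}\oplus\bigoplus_{\alpha\in\Phi}\mathfrak{g}_\alpha$, each $\mathfrak{g}_\alpha$ being one-dimensional. Pick a \emph{regular} element $a\in\mathfrak{h}$ chosen generically so that the scalars $\{\alpha(a)\}_{\alpha\in\Phi}$ are all nonzero and pairwise distinct, and set
\begin{equation}
b = \sum_{\alpha\in\Phi} e_\alpha,
\end{equation}
where $e_\alpha\in\mathfrak{g}_\alpha\setminus\{0\}$ is a fixed root vector. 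Then
\begin{equation}
(\mathrm{ad}\,a)^k b = \sum_{\alpha\in\Phi}\alpha(a)^k e_\alpha , \qquad k=0,1,2,\dots
\end{equation}
Because the scalars $\alpha(a)$ are distinct and nonzero, a Vandermonde argument shows that the vectors $\{(\mathrm{ad}\,a)^k b\}_{k\ge 1}$ span $\bigoplus_\alpha \mathfrak{g}_\alpha$. Hence every root vector $e_\alpha$ lies in the subalgebra $\mathfrak{g}(a,b)$ generated by $a$ and $b$. Then $[e_\alpha,e_{-\alpha}]=h_\alpha\in\mathfrak{h}$, and since the coroots $\{h_\alpha\}$ span $\mathfrak{h}$ in a semisimple algebra, we conclude $\mathfrak{g}(a,b)=\mathfrak{g}$.

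To pass to a real semisimple Lie algebra $\mathfrak{g}_{\mathbb R}$, I would consider the complexification $\mathfrak{g} = \mathfrak{g}_{\mathbb R}\otimes_{\mathbb R}\mathbb{C}$, which is again semisimple. The idea is to pick $a,b\in\mathfrak{g}_{\mathbb R}$ so that they still satisfy the conditions above when viewed inside $\mathfrak{g}$: take a Cartan subalgebra of $\mathfrak{g}_{\mathbb R}$, choose $a$ regular (with distinct nonzero eigenvalues under $\mathrm{ad}$ on $\mathfrak{g}$), and choose $b$ so that all complex root-space components of $b$ are nonzero. Regularity and the nonvanishing condition are open conditions, and the existence of such real $(a,b)$ follows from the density of real points in a complex Zariski-open set. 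Then the argument above, carried out in the complexification, shows that $a,b$ generate $\mathfrak{g}$ over $\mathbb{C}$; intersecting with $\mathfrak{g}_{\mathbb R}$ gives that they generate $\mathfrak{g}_{\mathbb R}$ over $\mathbb{R}$.

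The main obstacle is the genericity argument used twice: first, the existence of a regular element with distinct root values is a transversality condition that must be justified (the failure locus is a finite union of hyperplanes in $\mathfrak{h}$); second, in the real case one has to guarantee that a suitable real $(a,b)$ exists, which is the technical heart of Kuranishi's original argument. The spanning step itself, via Vandermonde and the fact that root vectors close up to $h_\alpha$'s spanning $\mathfrak{h}$, is then essentially linear algebra combined with standard facts from the structure theory of semisimple Lie algebras.
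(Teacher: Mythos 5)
The paper does not prove this statement at all: it is quoted as Kuranishi's theorem and justified only by the citation to Kuranishi (1951), so there is no in-paper proof to compare yours against. On its own merits, your sketch is the standard and correct line of argument. In the complex case the three ingredients all hold: a regular $a\in\mathfrak{h}$ with the values $\alpha(a)$ nonzero and pairwise distinct exists because the failure locus is a finite union of kernels of the nonzero functionals $\alpha$ and $\alpha-\beta$, which cannot cover $\mathfrak{h}$; the Vandermonde step then gives all $e_\alpha$ in the subalgebra generated by $a$ and $b$; and the coroots $h_\alpha=[e_\alpha,e_{-\alpha}]$ span $\mathfrak{h}$ by semisimplicity (note that distinctness of the $\alpha(a)$ already forces them nonzero, since $\alpha$ and $-\alpha$ would otherwise both vanish at $a$). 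For the real case your descent also goes through: a real Cartan subalgebra $\mathfrak{h}_{\mathbb R}$ spans $\mathfrak{h}$ over $\mathbb{C}$, so the restrictions to $\mathfrak{h}_{\mathbb R}$ (respectively to $\mathfrak{g}_{\mathbb R}$) of the finitely many linear conditions to be avoided are still nonzero, and a finite union of proper real subspaces cannot exhaust the real form — this is the precise content of your Zariski-density remark. Finally, the passage back to $\mathbb{R}$ deserves one explicit line: if $\mathfrak{s}\subseteq\mathfrak{g}_{\mathbb R}$ is the real subalgebra generated by $a,b$, then $\mathfrak{s}+i\mathfrak{s}$ is a complex subalgebra containing $a,b$, hence equals $\mathfrak{g}$, and a dimension count forces $\mathfrak{s}=\mathfrak{g}_{\mathbb R}$. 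With those routine details filled in, your proposal is a complete proof, and it is essentially the classical argument that the cited reference formalizes.
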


\begin{definition}
 A non-unital Lie--Jordan algebra $(\L,\circ,\left[\cdot,\cdot\right])$ is called semisimple if the Lie algebra $(\L,\left[\cdot,\cdot\right])$ (i.e.\ the full algebra considered with the Lie product alone) is semisimple.
 \end{definition}

\begin{rem} 
Observe that the unit cannot be generated by Lie products, and hence it will be ``added by hand'' whenever necessary.
\end{rem}

The analogous for Lie--Jordan algebras  of Kuranishi's theorem is the following
\begin{theorem}
 Let $(\L,\circ,\left[\cdot,\cdot\right])$ be a semisimple Lie--Jordan algebra. Then the Jordan algebra $(\L,\circ)$ (i.e. the full algebra considered with the Jordan product alone) is generated by the Jordan products of three elements, plus the identity. In particular, one can use two generators $a$, $b$ of the Lie algebra $(\L,\left[\cdot,\cdot\right])$, and their commutator $c = [a,b]$.
\label{thm:3el}
\end{theorem}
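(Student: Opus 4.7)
The plan is to apply Kuranishi's theorem (Theorem~\ref{semisimple}) to obtain two elements $a,b$ that generate $\L$ as a Lie algebra, set $c=[a,b]$, and prove that the Jordan subalgebra $J\subseteq\L$ generated by $\{a,b,c\}$ (augmented by the identity, whenever present) equals $\L$. Since by Kuranishi every element of $\L$ is a linear combination of iterated Lie brackets of $a$ and $b$, it is enough to show that every such iterated bracket lies in $J$.

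The central tool is the associator identity (\ref{associator}), which I rewrite as
\begin{equation}
[y,[z,x]]\;=\;(x\circ y)\circ z\;-\;x\circ (y\circ z),
\end{equation}
whose right-hand side is manifestly a Jordan expression. Hence, whenever $x,y,z\in J$ the double commutator $[y,[z,x]]$ automatically belongs to $J$. This closure under ``double ad'', together with antisymmetry of $[\cdot,\cdot]$, is the only algebraic tool I will need. As an immediate illustration, the choices $(x,y,z)=(b,a,a)$ and $(b,b,a)$ already place $[a,c]=[a,[a,b]]$ and $[b,c]=[b,[a,b]]$ inside $J$.

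I then induct on the length $k$ of an iterated Lie bracket built from the letters $a,b$. The base cases $k=1,2$ are handled by $a,b\in J$ and $[a,b]=c\in J$. For the inductive step with $k\geq 2$, any length-$(k+1)$ iterated bracket has the form $\beta=[P,Q]$ with $P,Q$ iterated brackets whose lengths sum to $k+1$; in particular both lengths are $\leq k$, so $P,Q\in J$ by induction. If $P\in\{a,b\}$ and $Q$ has length $\geq 2$, write $Q=[R,S]$ with $R,S\in J$: the closure property gives $\beta=[P,[R,S]]\in J$. If instead $P$ has length $\geq 2$, write $P=[R,S]$ with $R,S\in J$; antisymmetry together with the closure property then yield $\beta=-[Q,[R,S]]\in J$. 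These cases exhaust all length-$(k+1)$ brackets, completing the induction.

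The main obstacle is exactly the gap between the associator identity, which only manufactures the doubly-nested commutator $[y,[z,x]]$, and the arbitrary iterated commutators required to exhaust $\L$ via Kuranishi. The inductive decomposition above closes this gap by always exposing one outer letter---directly when $P\in\{a,b\}$, or after a flip by antisymmetry when $\mathrm{length}(P)\geq 2$---thereby recasting $\beta$ in the form $[y,[z,x]]$ with $x,y,z$ already placed in $J$ at the previous stage. Once this is done, $J$ contains every iterated Lie bracket of $\{a,b\}$, hence by Kuranishi all of $\L$ up to the identity, and the theorem follows.
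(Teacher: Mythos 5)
Your proposal is correct and follows essentially the same route as the paper: Kuranishi's theorem provides the Lie generators $a,b$, the element $c=[a,b]$ is adjoined, and the associator identity (\ref{associator}) converts nested double commutators into Jordan expressions. Your explicit induction on bracket length, with the antisymmetry flip to reduce any iterated bracket $[P,Q]$ to the form $[y,[z,x]]$ with shorter arguments, simply makes rigorous the paper's assertion that ``all the elements generated by $a$, $b$ and $c$ are of the form of a triple commutator.''
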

\begin{proof}
 For simplicity assume, without loss of generality, that the algebra is non-unital. Since the Lie algebra $(\L,\left[\cdot,\cdot\right])$ is semisimple it can be generated by repeated Lie products of two elements $a$ and $b$. Then starting from $a$ and $b$ we generate with a first Lie bracket: 
\begin{equation}
 c = [a,b]
\end{equation}
then
\begin{equation}
 [a,c] = [a,[a,b]], \quad [b,c] = [b,[a,b]]
\end{equation}
and repeating 
\begin{equation}
 [a,[a,c]], [a,[b,c]], [b,[a,c]],[b,[a,c]],[b,[b,c]],[c,[a,c]],[c,[b,c]]
\end{equation}
and so on. We see that all the elements generated by $a$, $b$ and $c$ are of the form of a triple commutator. Recalling the associator identity (\ref{associator}) 
\begin{equation}
 (a\circ b)\circ c - a \circ (b \circ c) = \left[b,\left[c,a\right]\right],
\end{equation}
it follows that every element can be expressed as a linear combination of triple Jordan products, that is generated by Jordan products of $a,b$ and $c = [a,b]$.\\
\qedhere
\end{proof}

For finite-dimensional quantum systems, the space of quantum states can be immersed into the semisimple Lie--Jordan algebra $\mathfrak{u}(\mathrm{N})$, which can be Jordan-generated by three elements.
This means that one could witness the properties of a system by repeated measures of anticommutators of appropriate elements of the algebra, which is in principle experimentally feasible. Further investigation is required to check if the elements generating the algebra correspond to realization of states as projectors in the algebra.

\section{Conclusions and perspectives}
\label{concl}

In this article we  proposed an alternative definition of quantumness witnesses and classical states that directly refers to the algebra of  observables, and showed its equivalence with previous definitions that refer instead to the whole associative algebra, which is assumed to embed the real algebra of observables.  
This new definition has the advantage of being more operational, since it makes use only of observables. Moreover, it is also more suitable for generalization towards algebras of unbounded operators, because the Jordan product avoids the problems related to the CCR.

The  characterization of classicality and quantumness adopted in this article involves \emph{three} observables, in contrast to the use of a simple commutator. This could have a relation to the interesting property, proven in the paper, that  
semisimple Lie--Jordan algebras are fully generated by three appropriate observables, a results that points towards experiments.

As emphasized at the end of Sec.\ \ref{finite}, while the characterization (\ref{charqwfin}) is not suitable in the unbounded case, Definition \ref{classical state} in Sec.\ \ref{clstqw} still applies.
This opens a door to the study of unbounded observables in infinite-dimensional systems, such as coherent states \cite{sudarshan,glauber}, superpositions thereof, and the semiclassical limit \cite{FM1,FM2}. 

We conclude with an observation. It is clear that the notion of classicality is not necessarily related to that of macroscopicity or to the thermodynamical limit. The research of the last few decades has shown that a quantum system can behave classically (or semiclassically) if (some of) its observables commute  with every element in the algebra \cite{holevo}. For example, dissipative quantum systems display a number of classical features \cite{CFMP}. The notions of classicality and quantumness should then be contrasted on the basis of the different footings on which observables and states stand. A natural conclusion is that one should look at the non-commutativity of the algebras. A somewhat more hidden aspect is that the lack of associativity may play an important role. This is the message we tried to convey in this article.

\ack 
This work was partially supported by PRIN 2010LLKJBX on ``Collective quantum phenomena: from strongly correlated systems to quantum simulators". LF would like to thank F. Falceto for his kind hospitality at the University of Zaragoza where part of this work was prepared.

\appendix

\section{Commutation from the Jordan product}
\label{comm}

Consider two elements $a$ and $b$ of a Lie--Jordan Banach algebra $\L$. We may ask: which properties of the elements $a$ and $b$ can be expressed in terms of the Jordan product only, corresponding to the commutation relation $[a,b] = 0$? Of course the Jordan product is commutative but need not be associative. Following \cite{Hanche84} we will see which condition on the Jordan product is equivalent to the commutation of two elements.
Let us denote the Jordan multiplier by an element $a$ as $\delta_a$. Thus for all $c \in \L$ 
\begin{equation}
 \delta_a(c) = a \circ c = \frac{1}{2}\{a,c\}.
\end{equation}
\begin{definition}[Jordan-commutation]
 Two elements $a$ and $b$ are said to \emph{Jordan-commute} if the operators $\delta_a$ and $\delta_b$ commute:
\begin{eqnarray}\label{jordan comm}
 (\delta_a \delta_b - \delta_b \delta_a) (c) &=& a \circ (b \circ c) - b \circ (a \circ c) = 0, \qquad \forall\, c \in \L\\
&=& \frac{1}{4} \{a,\{b,c\}\} - \frac{1}{4} \{b,\{a,c\}\} = 0, \qquad \forall\, c \in \L.
\end{eqnarray}
\end{definition}
\begin{theorem}\label{jordancomm} Two elements $a$ and $b$ of a \LJB algebra $\L$ Jordan-commute if and only if they commute in the usual sense, i.e. $[a,b] = 0$.
\end{theorem}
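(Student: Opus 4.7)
The plan is to treat the two implications separately, with the nontrivial direction hinging on the Banach structure and the representation theory of $C^\ast$-algebras.

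For the easy direction, assume $[a,b]=0$, i.e.\ $ab=ba$ in the complexified associative algebra. A direct expansion of $a\circ(b\circ c)-b\circ(a\circ c)$ via $x\circ y=\frac{1}{2}(xy+yx)$ shows that every monomial appearing in the difference pairs up and cancels, so the Jordan commutator vanishes for every $c\in\L$.

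For the converse, the key algebraic move is to combine the associator identity (\ref{associator}) with the one obtained by interchanging $a$ and $b$ (both share the same left-hand side, because $\circ$ is commutative), and then apply the Jacobi identity (\ref{jacobi}) to the resulting difference. A short manipulation yields the compact formula
\begin{equation}
a\circ(b\circ c)-b\circ(a\circ c)=[c,[a,b]],
\end{equation}
so that Jordan-commutation of $a$ and $b$ is equivalent to $[c,z]=0$ for every $c\in\L$, where $z:=[a,b]$. In other words, $z$ lies in the center of $\L$, and therefore also in the center of the enveloping $\CA$ $\cA=\L\oplus i\L$.

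It remains to promote centrality of $z$ to $z=0$, which is where the $C^\ast$-norm enters. Let $\pi$ be any irreducible representation of $\cA$. Since $\pi(z)$ commutes with every element of $\pi(\cA)$, Schur's lemma forces $\pi(z)=\lambda_\pi\mathds{1}$ for some $\lambda_\pi\in\mathbb{C}$. But $\pi(z)=\frac{i}{2}[\pi(a),\pi(b)]$ is a commutator of bounded operators, and Wintner's classical estimate (iterate $[\pi(a),\pi(b)^n]=n\lambda_\pi\pi(b)^{n-1}$ and bound norms) shows that such a commutator can equal a scalar multiple of the identity only if that scalar vanishes. Hence $\pi(z)=0$ for every irreducible $\pi$, and since the irreducible representations of a $\CA$ separate its points, we conclude $z=[a,b]=0$.

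The main obstacle is precisely this last step: purely algebraic manipulations deliver only that $[a,b]$ is central, and the passage from centrality to vanishing genuinely requires both the $C^\ast$-norm properties (\ref{banach3}) and the representation theory of $\CAs$. This is also the reason, flagged in the remark preceding the theorem, why the analogous statement can fail in a general $*$-algebra lacking a compatible norm.
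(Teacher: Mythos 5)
Your proof is correct and follows essentially the same route as the paper's: use the associator identity (\ref{associator}) to show that Jordan-commutation of $a$ and $b$ is equivalent to $[a,b]$ lying in the Lie center, then pass to irreducible representations of the enveloping $\CA$, invoke Schur's lemma, and use the fact that irreducible representations separate elements. Two points of comparison are worth noting. First, your intermediate formula $a\circ(b\circ c)-b\circ(a\circ c)=[c,[a,b]]$, obtained by antisymmetrizing the associator identity and applying Jacobi, is the correctly signed version of the relation the paper simply writes down as $[[a,b],c]$; the sign is immaterial for the equivalence (both say $[a,b]$ is central), but your derivation is explicit where the paper's is not. Second, and more substantively, where the paper disposes of the scalar case with the terse phrase ``by antisymmetry'', you correctly recognize that a genuine theorem is needed: for bounded operators, $\pi([a,b])=\lambda_\pi\mathds{1}$ forces $\lambda_\pi=0$ precisely by the Wielandt--Wintner argument you sketch (the iteration $[\pi(a),\pi(b)^n]=n\lambda_\pi\,\pi(b)^{n-1}$ together with a norm estimate, with the easy separate treatment of the nilpotent case), since the trace argument suggested by ``antisymmetry'' is available only in finite dimensions, and the CCR example in the paper shows the statement genuinely fails for unbounded operators. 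So your write-up makes explicit, and thereby repairs, the one step the paper leaves essentially unjustified; otherwise the two proofs coincide in structure.
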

\begin{proof}
 From the associator identity (\ref{associator}), $a$ and $b$ Jordan-commute by definition (\ref{jordan comm}) if and only if
 \begin{equation}\label{ass id comm}
   a \circ (b \circ c) - b \circ (a \circ c) = [[a,b],c] = 0, \qquad \forall\, c \in \L.
 \end{equation}
If $[a,b] = 0$ then they trivially Jordan-commute. To prove the converse, recall that a \LJB algebra is the self-adjoint part of a C$^*$--algebra \cite{liejordan12}, and that every irreducible representation of a C$^*$--algebra separates elements, i.e.\ for every element of the algebra such that $x \neq 0$ there exists an irreducible representation $\pi$ such that $\pi(x) \neq 0$. Then assume $a$ and $b$ Jordan-commute and hence from Eq.\ (\ref{ass id comm}) their commutator $[a,b]$ is in the Lie center of the algebra $\L$. Assume by absurdum that $[a,b] \neq 0$ and let $\pi$ be an irreducible representation such that $\pi([a,b]) \neq 0$. Then from Schur's lemma it is well known that the Lie center of an irreducible representation is a multiple of the identity and this implies by antisimmetry that $\pi([a,b]) = 0$, which is a contradiction.
\end{proof}

\section*{References}

\end{document}